\newenvironment{changemargin}[2]{%
  \begin{list}{}{%
    \setlength{\topsep}{0pt}%
    \setlength{\leftmargin}{#1}%
    \setlength{\rightmargin}{#2}%
    \setlength{\listparindent}{\parindent}%
    \setlength{\itemindent}{\parindent}%
    \setlength{\parsep}{\parskip}%
  }%
  \item[]}{\end{list}}
\title{A Message-Passing Algorithm for Graph Isomorphism}
\author{Mohamed F. Mansour\thanks{The author is with Amazon  Inc.,  Sunnyvale, CA, 94089.}}
\begin{document}

\maketitle

\begin{abstract}
A message-passing procedure for solving the graph isomorphism problem  is proposed. The procedure resembles the belief-propagation algorithm in the context of graphical models inference and LDPC decoding. To enable the algorithm, the input  graphs are transformed into intermediate canonical representations of bipartite graphs. The matching procedure injects specially designed input patterns to the canonical graphs and runs a message-passing algorithm to generate two output fingerprints that are matched if and only if the input graphs are isomorphic.

\end{abstract}

\begin{keywords} 
graph theory, graph isomorphism, graph automorphism, complexity, information theory.
\end{keywords}


\pagestyle{myheadings}
\thispagestyle{plain}
\markboth{Mansour:A Message-Passing Algorithm for Graph Isomorphism}
{Mansour:A Message-Passing Algorithm for Graph Isomorphism}

\section{Introduction}
Graph isomorphism  is a classical problem at the intersection of few disciplines including graph theory, pattern recognition, and computing theory; 
with both theoretical and applied importance. At the applied side, it is a special case of graph matching, which is a cornerstone of many pattern recognition applications. At the theoretical side, it is one of the few problems in computing theory whose complexity is not known \cite{cormen2009introduction, graph_book}. A wealth of research work has addressed this problem during the last few decades, and several tutorials and workshops have been devoted to the problem \cite{thirty_years, kobler2012graph, babai2016graph, foggia2014graph}. In section \ref{sec:graph_background}, we give a brief overview of relevant prior art in the subject. 

In this work,  a new algorithm for graph isomorphism is developed. 
The algorithm adapts the well-known belief propagation algorithm \cite{Koller} to generate signatures that reflect the edge structure, wherein two graphs are isomorphic if and only if their signatures are identical. To enable the belief propagation algorithm for a general graph, the input graph is transformed to a bipartite canonical representation that preserves the edge structure. 
The matching algorithm is an iterative procedure that progressively matches pairs of nodes from the two canonical graphs by comparing their signatures when excited by a specially designed input pattern. The salient feature of the algorithm
is the absence of backtracking during the matching procedure. The conditions for the completeness of the proposed algorithm are derived, and an efficient implementation that resembles the sum-product algorithm is presented. The effectiveness of the algorithm is established by evaluating it using the TC-15 graph database \cite{foggia2001database}.

The paper is organized as follows. In section \ref{sec:background} a brief overview of  prior art of the graph isomorphism problem and the belief propagation algorithm is introduced. The proposed algorithm is described in details in section \ref{sec:algorithm}. The completeness conditions of the algorithm are established in section \ref{sec:analysis}. Finally, in section \ref{sec:discussion}, we provide generalizations of the proposed algorithm to other relevant graph matching problems, and present the evaluation results.

\subsection{Notations}
The discussion in this work assumes undirected and unweighted graphs. A generalization to other graph types is described in section \ref{sec:generalization}. 
The following notations are used:
\begin {itemize}
\item G1 and G2 refer to the two graphs under test.
\item $M$ is the total number of nodes in the graph.
\item $K$ is the total number of edges in the graph.
\item $L$ is the longest simple route in the graph.
\item bold letters/symbols refer to sets (or vectors), while non-bold ones refer to individual entries. 
\item $\Phi(.) : \mathbb{R}^M  \rightarrow \mathbb{R}^{ML}$ is the transformation that converts an input pattern of length $M$ to an output signature of length $ML$.
\item $\bf{A}$ is the set of ordered pairs of matched nodes.
\item $\overline{{\bf{A}}}_1$ and  $\overline{{\bf{A}}}_2$ are the sets of nodes in $G1$ and $G2$ respectively that have not been matched.
\item A node $\nu_i \in G2$ is \emph{similar} to a node $\nu_k \in G1$ (denoted $\nu_i \sim \nu_k$) if they belong to analogous cells in the two graphs, where a graph cell is one set of the graph partition to be defined in section \ref{sec:canon}.
\item ${\bf{Q}}(v)$ denotes the set of edges connected to node $v$. The order of ${\bf{Q}}(v)$, $|{\bf{Q}}(v)|$, is the number of edges connected to $v$.
\item ${\bf{Q}}(v)-\{E\}$ denotes the complement of $\{E\}$ in ${\bf{Q}}(v)$, i.e., edges other than $E$ that are connected to $v$.
\item If $|{\bf{Q}}(v)|=2$, then $\tilde{E}$ denotes the complement edge of $E$ in ${\bf{Q}}(v)$, i.e., ${\bf{Q}}(v)-\{E\} = \{\tilde{E}\}$.
\end {itemize} 

\section{Background}\label{sec:background}

\subsection{The Graph Isomorphism Problem} \label{sec:graph_background}
An isomorphism between two graphs is a bijection between the nodes of the two graphs that preserves the edge structure \cite{graph_book}. Graph automorphism is an identical problem for finding an isomorphism between a graph and itself, rather than the trivial identity mapping. Subgraph isomorphism is a related problem of finding an isomorphism between a graph and subgraph of another graph. These  problems are under the class of exact graph matching, where the edge structure is preserved. Inexact graph matching is more relevant in many applications, where the objective is to find a mapping that minimizes a distortion metric between the two graphs.

The graph isomorphism problem has been  studied by both applied and theoretical researchers \cite{babai2016graph}. This work falls into the applied research category of graph isomorphism. As noted in \cite{thirty_years, foggia2014graph},  most of the algorithms in this category can be  classified into two broad classes: tree-based algorithms, and algorithms based on canonical representation. Tree-based algorithms, e.g.,  \cite{ullmann1976algorithm, ghahraman1980graph,cordella1998graph,cordella2004sub,larrosa2002constraint}, progressively matches pairs of nodes as a tree search, with  backtracking when the search reaches a dead-end. Several data structures and heuristics were proposed to optimize the implementation of the basic procedure in \cite{ullmann1976algorithm}. The second category of practical graph isomorphism algorithms, e.g. \cite{mckay1981practical, mckay2014practical, darga2004exploiting, darga2008faster,junttila2011conflict},  maps the graph to a canonical representation that preserves isomorphism. The matching of the graphs is performed on the canonical representation which has polynomial time complexity. However, the conversion to the canonical form can have exponential complexity in the worst case \cite{miyazaki1997complexity}. 

A more relevant class of algorithms to our work is the class of algorithms based on randow walks, e.g., \cite{RW2005, cho2010reweighted, douglas2008classical}. The general idea is to generate a graph signature from the steady state output of the random walk, and match the signatures of the graphs under test. For example, in \cite{RW2005}, the steady state probability distribution of the random walk is used as a signature, and a random walk that resembles the  page rank algorithm \cite{page1999pagerank} is shown to provide a polynomial-time algorithm for a large class of graphs but fails under certain conditions on the graph spectrum.

The proposed matching algorithm combines the idea of canonical representations with the generation of a signature that reflects the edge structure. Unlike the canonical approach in \cite{mckay1981practical}, the canonical representation of the proposed algorithm is only an intermediate step in the graph matching algorithm. The transformation to the canonical form has polynomial complexity, but non-isomorphic graphs can have identical canonical representations. The canonical representation defines, by construction,  a partitioning of a graph that could  reveal most non-isomorphic graphs, and hence simplifies the matching procedure. The matching procedure in the proposed algorithm utilizes the graph partitioning in the canonical representation to prune the search space. This partitioning is also exploited to design the input patten to the message-passing algorithm. The process of generating the output signature is inspired by the message-passing algorithm in LDPC decoding \cite{Lin}, which is described in more details in the following subsection.

\subsection{Belief-Propagation Algorithm} \label{sec:MP}
The belief-propagation algorithm was developed for inference over graphical models of graphs with no cycles, i.e., trees \cite{Koller}, but few generalizations were introduced to graphs with cycles. In the context of LDPC decoding, a version of loopy belief-propagation algorithm, the sum-product algorithm, is utilized to compute the decoding posteriors \cite{Richardson, Lin}. 
In a nutshell, the decoder input is a vector of the likelihoods of each encoded bit after the communication channel. The output of the decoder is the likelihood of the decoded bits after processing the input likelihoods by running a message-passing algorithm on a graph that is designed from the LDPC code parity check matrix. The output likelihoods are, in general,  uniquely determined for a given input distribution and code matrix. 
A bipartite representation of LDPC parity check matrix, a.k.a, Tanner graph \cite{tanner}, is used. The Tanner graph is composed of variable nodes (that represent code bits) and check nodes (that represent parity bits).  Messages are exchanged between the two types of nodes, where $r_{ji}$ is a message from the $j$-th check-node to $i$-th variable node, and $q_{ij}$ is a message from the $i$-th variable node to the $j$-th check node. After initializing the messages with the likelihoods at the channel ouptut, the message passing algorithm updates the messages iteratively as follows \cite{Lin}:
\begin{eqnarray}
r_{ji}(0) &=& \frac{1}{2} + \frac{1}{2}  \prod_{i^\prime \in V_j-\{i\}} (1-2q_{i^\prime j}(1)) \\
r_{ji}(1) &=& 1-r_{ji}(0)\\
q_{ij}(0) &=& K_{ij} (1-P_i) \prod_{j^\prime \in C_i-\{j\}} r_{j^\prime i}(0) \\
q_{ij}(1) &=& K_{ij} P_i \prod_{j^\prime \in C_i-\{j\}} r_{j^\prime i}(1)
\end{eqnarray}
where $P_i$ is computed from the channel likelihood, $V_j$ is a set of variable-nodes connected to the check-node $j$, $C_i$ is the set of check-nodes connected to the variable-node $i$, and $K_{ij}$ is a normalization factor. In practice, the likelihood is computed in the log-domain, and the product in the above relations is converted to a sum operator, which is more tractable numerically. The essence of the algorithm that is used in the proposed matching algorithm is that a message from node $i$ to node $j$ combines messages from all other nodes $\{j^\prime\}$ that are connected to $i$. 
In practice, the algorithm converges to the output posteriors that depend on the input and the code graph after few decoding iterations.
 
\section{Matching Algorithm}\label{sec:algorithm}
\subsection{Canonical Representation}\label{sec:canon}

To enable the utilization of the belief-propagation algorithm, a \emph{bipartite} graph representation, that preserves graph isomorphism, is needed. The two disjoint sets of nodes of the canonical bipartite graph represent the nodes and the edges of the original graph. Each node in the first set corresponds to a node in the original graph, and each node in the second set corresponds to an edge in the original graph. Denote a node in the first set by $\nu$ and a node in the second set by $\xi$. An edge $\nu_i \text{--} \xi_k$ in the bipartite graph exists if and only if $\xi_k$ represents an edge in the original graph that is connected to the node in the original graph corresponds to $\nu_i$. Hence, each $\xi_k$ in the bipartite graph has exactly two edges to the nodes that  represent the bounding nodes in the original graph. Thus, the number of edges in the canonical graph is  twice the number of edges in the original graph.

A simple sorting procedure is applied to the nodes of the bipartite graph. $\{\nu_k\}$ are sorted according to their order. 
If two or more nodes have the same order, they are sorted chronologically according to the order of their neighbors. For example, If $v_1$ and $v_2$ have the same order, but $v_2$ has a neighbor with higher order than any neighbor of $v_1$, then $v_2$ is ranked first in the bipartite graph. If the highest neighbor orders of the vertices are the same, then the second highest orders of the neighbors are compared and so on. If the orders of all neighbors are the same, then the corresponding vertices are placed in \emph{arbitrary} order relative to each others in the bipartite graph. These are denoted as \emph{similar} nodes, and a class of similar nodes is denoted as a \emph{cell}. 
For consistency, a node that is not similar to any other node is also denoted as a cell of size $1$. 
Note that, the cells of a graph define a partition of the bipartite representation that is exploited by the matching algorithm where only pairs of nodes in analogous cells  are investigated for matching. 
An example of this canonical representation is shown in Fig. \ref{fig:canon_example}. 
\begin{figure}[ht]
\begin{center}
\includegraphics[width=4.0in,height=1.3in, trim = 0.5in 4.1in 6in 1.9in, clip,]{./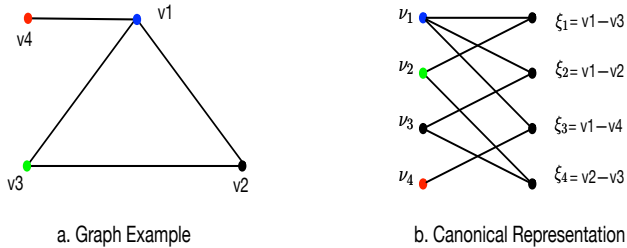}
\end{center} 
\caption{Example of the canonical representation. The bipartite graph has 3 cells: $\{\nu_1\}$, $\{\nu_2, \nu_3\}$, $\{\nu_4\}$}.
\label{fig:canon_example}
\end{figure}

The edge structure of the original graph is preserved in the canonical representation. Hence,  matching the bipartite canonical graphs is equivalent to matching the original graphs.
Note that, canonical representations of isomorphic graphs are not necessarily identical, because of the arbitrary ordering of nodes within a cell. Nevertheless, it significantly simplifies the matching algorithm of most graphs by restricting the matching only to analogous cells of the  canonical graphs. 
 
\subsection{Algorithm Outline} \label{sec:outline}
Before running the matching procedure,  the two graphs are converted to the canonical form as described in the previous subsection. A necessary condition for isomorphism is that the canonical representations have identical cell structures. Therefore, it is assumed in the following discussion that the graphs are connected and have the same cell structure.

After converting the two graphs to canonical forms, a message-passing algorithm is applied repeatedly to a specially designed input pattern to generate an output signature that reflects the graph edge structure. Isomorphic graph would produce identical signatures and vice versa.


The first step in the matching procedure is matching single-node cells, i.e., cells of size $1$. For isomorphic graphs, these cells have the same order in the canonical representation of the two graphs. This defines  the initial set of ordered pairs of matched nodes. Then, Algorithm \ref{Alg:Overall} outlines the matching procedure for the other nodes in the graphs, where analogous cells are progressively matched until all nodes are paired (for isomorphic graphs), or the procedure exits when a node could not be matched (for non-isomorphic graphs). 
The algorithm needs multiple iterations because the mapping between nodes is not known a priori, and needs to be progressively estimated.

\begin{algorithm}[ht]
\caption{MP Graph Isomorphism Algorithm}
\label{Alg:Overall}
   \renewcommand{\algorithmicrequire}{\textbf{Input:}}
    \renewcommand{\algorithmicensure}{\textbf{Output:}}
\begin{algorithmic}[1]
\Require $G1(\boldsymbol{\nu}, \boldsymbol{\xi}), G2(\boldsymbol{\nu}, \boldsymbol{\xi})$ (canonical representations of the input graphs)
\Ensure \emph{IsomorphicFlag} (true/false); ${\bf{A}}$ (ordered pairs of the node bijection if the graphs are isomorphic)
\State \If{ single-node cell structures are identical}
\State Construct the initial input pattern ${\bf{p}_0}$
\State Generate the output signature  ${\bf{g}}_1^{(0)}= \Phi({\bf{p}}_0,G1)$   
\State Generate the output signature ${\bf{g}}_2^{(0)}= \Phi({\bf{p}}_0,G2)$
\If {${\bf{g}}_2^{(0)} == {\bf{g}}_1^{(0)}$}
\State Initialize ${\bf{A}}, \overline{{\bf{A}}_1}, \overline{{\bf{A}}_2}$ from all single-node cells
\Else
\State	\emph{IsomorphicFlag} = false
\State        return
\EndIf
\Else
\State	\emph{IsomorphicFlag} = false
\State        return
\EndIf
\State
\While {$\overline{{\bf{A}}_1}$ is not empty}
\State Pick one node $\nu_i \in \overline{{\bf{A}}_1}$ 
\State Construct input pattern ${\bf{p}_1} = \eta(\nu_i , {\bf{A}})$ 
\State Generate the output signature  ${\bf{g}}_1= \Phi({\bf{p}}_1,G1)$
\For {every $\nu_j \in \overline{{\bf{A}}_2}$ such that $\nu_j \sim \nu_i$}
\State Construct input pattern ${\bf{p}_2}$ as a permutation of  ${\bf{p}_1}$
\State Generate the output signature  ${\bf{g}}_2= \Phi({\bf{p}}_2,G2)$
\If {${\bf{g}}_2 == {\bf{g}}_1$}
\State ${\bf{A}} = {\bf{A}} \cup \{(\nu_i,\nu_j)\}$
\State $\overline{{\bf{A}}_1} = \overline{{\bf{A}}_1} -\{\nu_i\}$
\State $\overline{{\bf{A}}_2} = \overline{{\bf{A}}_2} -\{\nu_j\}$  
\State {\bfseries break}  
\EndIf
\EndFor
\If {$\nu_i$ is not matched}
\State \emph{IsomorphicFlag} = false
\State return
\EndIf
\EndWhile
\State \emph{IsomorphicFlag} = true
\end{algorithmic}
\end{algorithm}

The input patterns are  designed to resolve ambiguity in nodes mapping if the graphs are isomorphic. Note that, a distinct input pattern is generated for each node in either graph, and only nodes within analogous cells of the two graphs are investigated for matching. The input pattern is designed such that matched nodes see the same input pattern at analogous entries, and whenever ambiguity exists ambiguous nodes are assigned the same numerical value. Hence, the input patterns should fulfill the following requirements:
\begin{itemize}
\item [\bf{[I1]}]  Input entries for each pair of already matched nodes have the same numerical value, and this value is unique within the input pattern.
\item [\bf{[I2]}]  Input entries that correspond to different cells always have different values, such that  cell ambiguity is  resolved.
\item [\bf{[I3]}] The input entry for the node under investigation is distinct, and the same numerical value is generated for the node in the other graph that is investigated for matching.
\end{itemize}
The generation of the input pattern is described in details in the following section. 

The core component of the graph matching algorithm is the mapping of an input pattern to an output signature, which is the mapping function $\Phi$ in Algorithm \ref{Alg:Overall}. The matching is performed on the output signatures of the two graphs; hence, the mapping should uniquely reflect the graph edge structure. By close inspection of  Algorithm \ref{Alg:Overall}, it is straightforward to deduce that the following conditions are sufficient for the completeness of the algorithm:
\begin{itemize}
\item [\bf{[C1]}] Isomorphic graphs produce identical signatures for each pair of matched nodes.
\item [\bf{[C2]}] Non-Isomorphic graphs do not produce the same signature  for at least one node. 
\item [\bf{[C3]}] A pair of nodes in isomorphic graphs that cannot be matched do not produce identical signatures.
\item [\bf{[C4]}] If multiple bijections exist between two graphs, the algorithm produces at least one of these bijections.
\end{itemize}
The proposed mapping function resembles the message-passing algorithm in section \ref{sec:MP}, and it is described in details in section \ref{sec:mapping}, and a specific  implementation is described in section \ref{sec:implementation}. In section \ref{sec:proofs}, the conditions for satisfying {\bf{C1}}-{\bf{C4}} are derived and proved.


\subsection{Input Pattern Generation}\label{input_gen}

The first input pattern, ${\bf{p}_1} = \eta(\nu_i,{\bf{A}}) \in  \mathbb{R}^M$, is generated as follows (where $\{\alpha_r\}$ and $\beta$ are predefined constants with $\alpha_i \neq \alpha_j \text{ if }  i\neq j,  \beta \neq \alpha_r \ \ \forall r$, and $\beta \neq 0$):
\begin{enumerate}
\item set ${\bf{p}_1}(i) = \beta$.
\item $\forall$ $k \neq i$ such that $\nu_k \in \overline{{\bf{A}}_1}$, set ${\bf{p}_1}(k) = \alpha_r$ where $\nu_k$ is in the $r$-th cell.
\item  $\forall$ $k \neq i$ such that $\nu_k \notin \overline{{\bf{A}}_1}$ (i.e., $\nu_k$ has been matched), set ${\bf{p}_1}(k) = \gamma_k$, where $\gamma_k$ is a random number, and $\gamma_k \neq \gamma_j$ for $k\neq j$,  and $\gamma_k \neq \alpha_r \text{ for all }k, r $.
\end{enumerate}
Similarly, ${\bf{p}_2} = \eta(\nu_j,{\bf{A}})$ is generated  as a permutation of ${\bf{p}_1}$:
\begin{enumerate}
\item set ${\bf{p}_2}(j)  = \beta$.
\item $\forall$ $k \neq j$ such that $\nu_k \in \overline{{\bf{A}}_2}$, set ${\bf{p}_2}(k) = \alpha_r$ where $\nu_k$ is in the $r$-th cell. 
\item The entries that correspond to matched nodes are set to the same random variable of the corresponding entry in ${\bf{p}_1}$. For example, if $(\nu_k, \nu_m) \in {\bf{A}}$, then ${\bf{p}_2}(m) = {\bf{p}_1}(k) = \gamma_k$.
\end{enumerate}
It is straightforward to show that this construction ensures that the requirements {\bf{I1}}-{\bf{I3}} are satisfied.

\subsection{The Mapping Function}\label{sec:mapping}
The design of the mapping functions resembles a message-passing procedure of the belief-propagation algorithm \cite{Koller}. 
Denote the set of edges in the canonical bipartite graph by $\bf{E}$ (not to be confused by the nodes that represent edges of the original graph denoted by $\boldsymbol{\xi}$). 
The function is outlined  in Algorithm \ref{Alg:Mapping}. The sim\_sort() procedure at line \ref{sortline} performs a separate sort of the entries of each cell. To avoid permutation errors, $f^{(l)}$ and $y^{(l)}$ are either applied to the \emph{ordered} vector of $\{\lambda(E)\}$ or to the  sum of its elements as in conventional belief-propagation \cite{Richardson}. 

\begin{algorithm}[ht]
\caption{MP-Based Mapping Function}
\label{Alg:Mapping}
   \renewcommand{\algorithmicrequire}{\textbf{Input:}}
    \renewcommand{\algorithmicensure}{\textbf{Output:}}
\begin{algorithmic}[1]
\Require $G(\boldsymbol{\nu}, \boldsymbol{\xi})$; Input Pattern ${\bf{p}} \in \mathbb{R}^M$
\Ensure signature ${\bf{g}} \in \mathbb{R}^{ML}$
\State Set $\Gamma(\boldsymbol{\nu}) = \bf{p}$; $\lambda({\bf{E}}) = \bf{0}$  
\For { $l  = 1:L $}
\For {every $\nu \in {\boldsymbol{\nu}}$}
\For{every $E \in {\bf{Q}}(\nu)$}
\State $\Psi(E) = f^{(l)}\left(\{\lambda(E^\prime)\}_{E^\prime \in {\bf{Q}}(\nu)-\{E\}},\Gamma(\nu) \right)$ 
\EndFor
\EndFor
\For {every $\xi \in {\boldsymbol{\xi}}$}
\For{every $E \in {\bf{Q}}(\xi)$}
\State $\lambda(E) = h^{(l)}\left(\Psi(E), \Psi(\tilde{E}) \right)$ \label{line:lambda}  \label{lambda_def} 
\EndFor
\EndFor
\For {every $\nu \in {\boldsymbol{\nu}}$}
\State $\Gamma(\nu) = y^{(l)}\left(\{\lambda(E^\prime)\}_{E^\prime \in {\bf{Q}}(\nu)}\right)$ 
\EndFor
\State ${\bf{g}}((l-1)M+1:lM)= \text{sim\_sort}\left(\Gamma(\boldsymbol{\nu})\right)$ \label{sortline}
\EndFor
\end{algorithmic}
\end{algorithm}

\section{Algorithm Analysis}\label{sec:analysis}

\subsection{Conditions}\label{sec:proofs}
The first step in the algorithm is matching the cell structure of the two graphs. If they do not match, then the graph are not isomorphic. Therefore, in the following discussion it is always assumed that the two graphs have identical cell structure. In the following, we prove that the sufficient conditions {\bf{C1}}-{\bf{C4}} (as outlined in section \ref{sec:outline}) are satisfied.

By close inspection of the matching algorithm, it is straightforward to deduce that {\bf{C1}} is always satisfied with any deterministic choice  of the mapping function. This is self-evident since the same numerical entries of the input propagate through the same deterministic mapping. 
The other conditions are less obvious and require extra conditions on the input pattern and the mapping functions. We start with a general observation of the matching algorithm that will be used in subsequent proofs.

\begin{lemma} \label {lemma:matching}
If a statement about the matching algorithm is true when the maximum cell size is less than $n$, then it is also true when the maximum cell size is $n$. 
\end{lemma}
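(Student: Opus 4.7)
The plan is a straightforward inductive reduction: I would argue that any execution of Algorithm~\ref{Alg:Overall} on a graph with maximum cell size $n$ can, after committing a single match within each cell of size $n$, be viewed as an execution on an effective configuration with maximum cell size at most $n-1$, at which point the hypothesis of the lemma applies.

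The key observation is that the input pattern rules of Section~\ref{input_gen} distinguish cell-mates only by whether they are matched: every unmatched member of a cell $C$ receives the same value $\alpha_r$, while every matched node receives a unique $\gamma$ (rule~3). Therefore, when the outer loop of Algorithm~\ref{Alg:Overall} first visits a node $\nu_i$ in a size-$n$ cell $C$, the choice of $\nu_i$ within $C$ is symmetric, and the candidate loop over $\nu_j$ in the analogous cell of $G2$ is also symmetric. Whichever outcome the algorithm produces---success with some particular $\nu_j$, or rejection of all candidates---the updated set $\mathbf{A}$ from that point on plays the role of a landmark that converts the $n$-element cell $C$ into a singleton $\{\nu_i\}$ of matched nodes together with an $(n-1)$-element residual sub-cell, as far as every subsequent input pattern is concerned.

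Iterating this observation over all cells of maximum size, the algorithm state after at most one match per size-$n$ cell has effective maximum cell size at most $n-1$. Invoking the hypothesis on this reduced state governs the remainder of the execution, and the bounded prefix consisting of the initial matches themselves is transparent because, by the symmetry just noted, it introduces no new case analysis beyond choosing a representative from a symmetric set. Hence any statement that holds for maximum cell size less than $n$ lifts to maximum cell size $n$.

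The main obstacle I expect is making the phrase ``effective cell size'' rigorous enough that the hypothesis formally applies. I would resolve this by checking that both $\eta(\cdot,\mathbf{A})$ and $\Phi$ depend on the cell structure only through the partition of the currently unmatched nodes by their $\alpha_r$-values; the per-cell sort at line~\ref{sortline} of Algorithm~\ref{Alg:Mapping} confirms that no mixing between cells occurs in the signature. A minor additional care is needed when several cells of size $n$ coexist, but this is handled by iterating the reduction one cell at a time.
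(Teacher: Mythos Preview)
Your proposal is correct and follows essentially the same reduction as the paper: split each size-$n$ cell into a singleton (the first node processed/matched) plus an $(n-1)$-element residual, then invoke the hypothesis on the reduced configuration. The paper frames this structurally---matching a size-$n$ cell $\theta^{(1)}$ against $\theta^{(2)}$ is equivalent to trying the $n$ partitions of $\theta^{(2)}$ into $1+(n-1)$---while you frame it operationally along the execution of Algorithm~\ref{Alg:Overall}, but the content is the same. Your additional remarks on making ``effective cell size'' precise via the $\alpha_r$/$\gamma_k$ distinction and on handling several size-$n$ cells are sound refinements that the paper leaves implicit.
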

\begin{proof}
A cell of size $n$ could be partitioned into two cells of sizes $1$ and $n-1$ (with a total of $n$ possible permutations). Therefore, matching a cell $\theta^{(1)} \in \text{G1}$ of size $n$  to the corresponding cell $\theta^{(2)}\in \text{G2}$ is equivalent to running $n$ matchings of two cells of size $1$ and $n-1$ between a partition of $\theta^{(1)}$ and all possible partitions of $\theta^{(2)}$. The two cells, $\theta^{(1)}$ and $\theta^{(2)}$,  are matched if there is at least one exact matching among the $n$ matchings, otherwise they are not matched. Hence, the problem is factored to matching cells of size less than $n$ where the statement is true. Therefore, the statement would also be true when the cell size is $n$.
\end{proof}
The above lemma simplifies proving statements about the matching algorithm to the  case where each cell has the minimum number of nodes because the lemma completes the mathematical induction procedure if the induction is done over the maximum cell size.
\begin{theorem}\label{theorem:I}
A sufficient condition for {\bf{C2}} is that  the constituent functions $\{f^{(l)}, h^{(l)}, y^{(l)}\}_{1\leq l \leq L }$ are injective.
\end{theorem}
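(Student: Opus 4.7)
The plan is to invoke Lemma \ref{lemma:matching} so that it suffices to treat the case in which every cell of both canonical graphs has size one. In this regime the input construction of Section \ref{input_gen} (applied, for example, to the initial pattern $\mathbf{p}_0$ of Algorithm \ref{Alg:Overall}) assigns pairwise distinct numerical values to all $M$ nodes: each singleton cell contributes its own $\alpha_r$, the active node uses $\beta$, and any already matched nodes use the distinct $\gamma_k$. Consequently the sim\_sort step collapses to the identity and the output signature is the concatenation, in canonical order, of the values $\Gamma^{(l)}(\boldsymbol{\nu})$ at the end of each of the $L$ outer iterations of Algorithm \ref{Alg:Mapping}.

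The core of the argument is an induction on the iteration index $l$ showing that, under injectivity of $f^{(l)}, h^{(l)}, y^{(l)}$, the scalar $\Gamma^{(l)}(\nu)$ is a faithful (injective) encoding of the rooted labeled substructure of the canonical bipartite graph generated by all walks of length at most $l$ starting at $\nu$, where every $\nu$-node carries its unique input label. In the inductive step, $\Psi(E)$ is an injective image of $\Gamma(\nu)$ together with the multiset of previous-iteration $\lambda$-messages on the other edges of $\nu$, so it encodes the depth-$(l-1)$ substructure rooted at $\nu$ with the branch through $E$ removed; the update $\lambda(E) = h^{(l)}(\Psi(E), \Psi(\tilde{E}))$ is an injection on an ordered pair (the ordering is fixed by which edge of $\mathbf{Q}(\xi)$ is $E$), so it encodes both subtrees meeting at $\xi$; finally $y^{(l)}$ injectively aggregates the updated $\lambda$ values on $\mathbf{Q}(\nu)$ into $\Gamma^{(l)}(\nu)$. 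Composing these three injections yields the required correspondence at depth $l$.

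Setting $l=L$ and using that $L$ upper bounds the longest simple route, every node of the connected component of $\nu$ is reached by some walk of length at most $L$ in the bipartite graph, so $\Gamma^{(L)}(\nu)$ encodes the entire labeled canonical graph rooted at $\nu$. If the signatures of $G1$ and $G2$ agree, then injectivity forces the labeled rooted graphs at every canonical index to coincide; since the input labels are globally distinct these local identifications assemble into a single label-preserving bijection between the canonical graphs, and stripping the labels produces a graph isomorphism. The contrapositive is exactly \textbf{C2}.

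The step I expect to be the main obstacle is pinning down ``the depth-$l$ rooted labeled substructure'' precisely enough that the three injective updates correspond to reversible tree-unfolding operations; cycles in the underlying graph are the delicate point, but because the label at the root and at every other node is unique in the cell-size-one reduction, two distinct unfoldings of the same cyclic substructure are still distinguishable by the recorded label sequence, so the recursion carries through without special casing. A secondary care point is ensuring that the permutation-invariant aggregation inside $f^{(l)}$ and $y^{(l)}$ does not introduce artificial collisions; this is absorbed by the convention, noted just before Algorithm \ref{Alg:Mapping}, of applying those functions to the sorted vector of $\{\lambda(E)\}$, after which the lift back to arbitrary cell sizes via Lemma \ref{lemma:matching} is immediate.
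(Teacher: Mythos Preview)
Your proposal is correct and follows essentially the same strategy as the paper: reduce via Lemma~\ref{lemma:matching} to the case of singleton cells, use the resulting distinctness of all input labels together with injectivity of the constituent functions to argue that after $L$ iterations the signatures faithfully reflect the full edge structure, and conclude by contrapositive. Your framing in terms of the rooted labeled substructure (the computation-tree unfolding) makes explicit what the paper states only informally as ``distinct values that depend on the inputs'' and ``reflect the edge structure,'' but the decomposition, the key lemma, and the order of the argument are the same.
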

\begin{proof}
Let $\text{G1}$ and $\text{G2}$ be non-isomorphic graphs with identical cell structure. The proof proceeds by induction on the size of the biggest cell in the graph. 
First, assume that all  cells have only one node, so that all entries of the input pattern are distinct (by following the construction procedure in section \ref{input_gen}). Hence, if $f^{(1)}$ is injective, then all $\{\Psi(E)\}$ and $\{\Psi(\tilde{E})\}$ in line \ref{lambda_def} of Algorithm \ref{Alg:Mapping}  have distinct values for each $E$ and $\tilde{E}$. 
Similarly, if $\{f^{(l)}, h^{(l)}, y^{(l)}\}_{1\leq l \leq L }$ are injective, then $\{\Gamma(\nu)\}$ at each iteration will always have distinct values that depends on the inputs of the mapping functions. Hence, the output signature will be composed of distinct values that reflect the edge structure. If the two graphs are not isomorphic, then at least one node in $\text{G1}$ does not have a mapping to a node in $\text{G2}$ that preserves the edge structure. Therefore, at least one mapping function has different inputs for the two graphs which would produce different outputs that propagate as different entries in the output signature. This different mapping can always be reached if a sufficient number of iterations is utilized so that the longest graph route is fully covered. The proof is completed for the larger cell size by invoking Lemma \ref{lemma:matching}.
\end{proof}

\begin{corollary} \label{corollary:C3}
If the input pattern is constructed as in section \ref{input_gen}, then a sufficient condition for {\bf{C3}} is that  the constituent functions $\{f^{(l)}, h^{(l)}, y^{(l)}\}_{1\leq l \leq L }$ are injective.
\end{corollary}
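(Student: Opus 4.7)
The plan is to reduce Corollary \ref{corollary:C3} to the non-isomorphic setting of Theorem \ref{theorem:I} by exploiting the distinguishing effect of the input pattern. The input ${\bf p}_1 = \eta(\nu_i, {\bf A})$ assigns the unique value $\beta$ to $\nu_i$, a pairwise-unique value $\gamma_k$ to each already-matched node $\nu_k$ (with the partner node $\nu_m$ receiving the \emph{same} $\gamma_k$ in ${\bf p}_2$), and a cell-specific value $\alpha_r$ to every unmatched node in the $r$-th cell; ${\bf p}_2 = \eta(\nu_j, {\bf A})$ is the analogous permutation with $\nu_j$ receiving $\beta$. Thus the matched pairs in ${\bf A}$ and the tentative pair $(\nu_i, \nu_j)$ act as fully distinguished color-anchors, while unmatched nodes carry only their cell-color.

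Following the induction template of Theorem \ref{theorem:I}, I would first handle the base case in which every cell has size $1$. Then all entries of ${\bf p}_1$ (and of ${\bf p}_2$) are distinct, and the two input vectors are entrywise determined by the putative bijection $\nu_k \mapsto \nu_m$ for $(\nu_k, \nu_m) \in {\bf A}$ and $\nu_i \mapsto \nu_j$. The hypothesis that $(\nu_i,\nu_j)$ cannot be matched means precisely that no edge-preserving bijection extending ${\bf A}$ sends $\nu_i$ to $\nu_j$; equivalently, the two canonical bipartite graphs, viewed as colored graphs under ${\bf p}_1$ and ${\bf p}_2$, are not color-isomorphic. The proof of Theorem \ref{theorem:I} then transfers: injectivity of $f^{(l)},h^{(l)},y^{(l)}$ ensures that the anchor values propagate without collision, so some node $\nu$ inevitably receives different multisets of incoming $\{\lambda(E')\}$ in G1 versus G2 within $L$ iterations (since $L$ covers the longest simple route); this disagreement survives the subsequent injective updates and the per-cell sim\_sort on line \ref{sortline}, producing ${\bf g}_1 \neq {\bf g}_2$. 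The inductive step is immediate from Lemma \ref{lemma:matching}: a maximum cell size of $n$ factors into matchings of cells of sizes $1$ and $n-1$, for which the statement holds by the inductive hypothesis.

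The main obstacle is the base case, and specifically the translation of the algorithmic non-extendability of ${\bf A}\cup\{(\nu_i,\nu_j)\}$ into a clean structural statement strong enough to feed the injective-propagation machinery of Theorem \ref{theorem:I}. The cleanest route is to regard the input pattern as an external coloring that any candidate bijection must preserve, and to observe that a color-preserving bipartite isomorphism is the same object as a valid extension of ${\bf A}$ containing $(\nu_i,\nu_j)$; its non-existence therefore yields non-isomorphism in the colored sense, which is exactly the hypothesis under which Theorem \ref{theorem:I}'s argument operates. Care must be taken to ensure that the $\gamma_k$ values are chosen generically enough (as stipulated in section \ref{input_gen}) so that no accidental numerical coincidence masks a structural difference before $L$ iterations complete.
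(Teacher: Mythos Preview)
Your proposal is correct and shares the paper's skeleton---Lemma~\ref{lemma:matching} for the induction and injectivity of $\{f^{(l)},h^{(l)},y^{(l)}\}$ for propagation---but the base case is handled differently. The paper observes that {\bf C3} is vacuous when every cell has size~$1$ (only similar nodes are ever compared, and isomorphic graphs with singleton cells admit a unique bijection), so it takes size~$2$ as the first nontrivial case and argues directly: if $\nu^{(1)}$ truly matches $\nu^{(2)}$ but is being tested against $\eta^{(2)}$, then the input patterns disagree at the already-matched pair $(\nu^{(1)},\nu^{(2)})$, and injectivity pushes that disagreement into the signature. You instead push the Lemma~\ref{lemma:matching} reduction one step further, down to all-singleton residual cells, and recast the question as colored non-isomorphism so that Theorem~\ref{theorem:I}'s machinery can be invoked wholesale. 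Your route makes the dependence on Theorem~\ref{theorem:I} explicit and is conceptually tidy; the paper's route is shorter because it pinpoints the single offending input entry rather than appealing to a global colored-isomorphism statement. One small wording issue: your ``base case in which every cell has size~$1$'' should be phrased as the terminal stage of the Lemma~\ref{lemma:matching} reduction (with ${\bf A}$ enlarged by the tentative fixings), not as a standalone hypothesis on the original graphs, since in the latter reading {\bf C3} has no content.
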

\begin{proof}
Note that, nodes from different cells are not matched in Algorithm \ref{Alg:Overall}. Therefore, {\bf{C3}} is relevant only when the  cell size is at least $2$. Therefore, by utilizing Lemma \ref{lemma:matching}, we need to prove the corollary only when the maximum cell size is $2$. Let $\theta^{(1)}$ and $\theta^{(2)}$ be analogous cells of the isomorphic graphs $\text{G1}$ and $\text{G2}$ respectively. Let the nodes in each cell be $\{\nu^{(i)}, \eta^{(i)}\}$ with $i\in\{1,2\}$. Further, assume $\nu^{(1)}$ is matched to $\nu^{(2)}$ but not matched to $\eta^{(2)}$. Hence, in inspecting the matching between $\nu^{(1)}$ and $\eta^{(2)}$, different entries of the input pattern are produced at the matched nodes $\nu^{(1)}$ and $\nu^{(2)}$. If all the constituent functions are injective, this would result in a different output at some stage that would propagate to the output and produce mismatched signatures. 
\end{proof}
Finally, {\bf{C4}} is fulfilled by the following result:
\begin{lemma} \label {lemma:C4}
The sufficient conditions for {\bf{C4}} are:
\begin{enumerate}
\item Entries in the input patterns to the two graphs that correspond to matched nodes are identical.
\item Entries in an input pattern that correspond to already matched nodes have distinct values.
\item $\{f^{(l)}, h^{(l)}, y^{(l)}\}_{1\leq l \leq L }$ are injective.
 \end{enumerate}
\end{lemma}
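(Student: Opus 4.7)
The plan is to prove Lemma \ref{lemma:C4} by induction on the maximum cell size, using Lemma \ref{lemma:matching} to reduce the step. The core claim I want to establish is an equivalence at each iteration of Algorithm \ref{Alg:Overall}: a candidate pair $(\nu_i,\nu_j)$ produces matching signatures ${\bf{g}}_1 = {\bf{g}}_2$ if and only if some bijection $\pi$ extending the committed matching ${\bf{A}}$ satisfies $\pi(\nu_i) = \nu_j$. Once this equivalence is in place, the greedy acceptance in the inner loop can never commit to a dead-end partial matching, so the algorithm terminates with a complete bijection whenever at least one exists.

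For the base case, suppose every cell has size one. The single-node cell initialization in Algorithm \ref{Alg:Overall} fixes the whole bijection, and there is nothing further to produce. For the inductive step, assume the claim whenever the maximum cell size is below $n$, and consider $\text{G1}$, $\text{G2}$ with maximum cell size $n$ admitting at least one bijection extending ${\bf{A}}$. Fix the node $\nu_i \in \overline{{\bf{A}}_1}$ chosen by the algorithm and a candidate $\nu_j \in \overline{{\bf{A}}_2}$ with $\nu_j \sim \nu_i$. The input patterns ${\bf{p}}_1 = \eta(\nu_i,{\bf{A}})$ and ${\bf{p}}_2 = \eta(\nu_j,{\bf{A}})$ obey conditions 1 and 2 by the construction in section \ref{input_gen}: matched-node entries carry pairwise distinct $\gamma_k$ values and are identified across the two graphs via ${\bf{A}}$. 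If a bijection $\pi$ extending ${\bf{A}}$ with $\pi(\nu_i)=\nu_j$ exists, pushing every intermediate value of Algorithm \ref{Alg:Mapping} through $\pi$ shows that the two runs produce identical $\Psi$, $\lambda$, and $\Gamma$ values on corresponding edges and nodes at every iteration, hence ${\bf{g}}_1 = {\bf{g}}_2$.

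The converse requires showing that if no bijection extending ${\bf{A}}$ sends $\nu_i$ to $\nu_j$, then ${\bf{g}}_1 \neq {\bf{g}}_2$. Here I would mimic the induction in Theorem \ref{theorem:I}, applied to the "relative'' matching problem on the unmatched sets $\overline{{\bf{A}}_1}$ and $\overline{{\bf{A}}_2}$ with the $\gamma_k$ values acting as fixed distinguished labels. Condition 2 is what makes this work: since no two matched nodes share an input value, the injective constituent functions $\{f^{(l)}, h^{(l)}, y^{(l)}\}$ cannot accidentally equate the messages produced at structurally different matched neighbors, so any structural discrepancy between the two reductions propagates through at most $L$ iterations and survives to the output, exactly as in the non-isomorphic case. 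Combining both directions yields the equivalence; hence, any $\nu_j$ the algorithm selects defines an augmented matching ${\bf{A}} \cup \{(\nu_i,\nu_j)\}$ that is still extendable. After this update, $\nu_i$ and $\nu_j$ move from cell labels $\alpha_r$ to distinct $\gamma$ labels in all subsequent iterations, strictly shrinking the ambiguity associated with their cell; the inductive hypothesis (together with Lemma \ref{lemma:matching}) then finishes the argument.

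The step I expect to be the main obstacle is precisely the converse direction outlined in the previous paragraph, i.e., upgrading Theorem \ref{theorem:I} to a relative statement in the presence of the frozen matched-node entries. The challenge is to verify that freezing selected coordinates of the input pattern to distinct constants does not destroy the propagation-of-distinctness argument driving Theorem \ref{theorem:I}; one must check that injectivity of each $f^{(l)}, h^{(l)}, y^{(l)}$ still forces distinct outputs whenever the structural inputs disagree, even when some neighbors contribute fixed $\gamma_k$ values. The commutativity of the induction on maximum cell size with the growing set of frozen matched nodes is the one technically subtle verification; every other step in the plan is bookkeeping following the pattern already established in Theorem \ref{theorem:I} and Corollary \ref{corollary:C3}.
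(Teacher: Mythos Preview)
Your proposal shares the paper's high-level scaffold (reduce via Lemma \ref{lemma:matching} to cells of size at most $2$), but you go considerably further than the paper's own proof. The paper's argument is much shorter: after invoking Lemma \ref{lemma:matching}, it treats only the fully symmetric case in which both nodes of a size-$2$ cell can be matched to both nodes of the analogous cell, and simply checks that if the algorithm commits to $(\nu^{(1)},\nu^{(2)})$ then conditions 1 and 2 force the subsequent pair $(\eta^{(1)},\eta^{(2)})$ to produce identical signatures as well. It does not formulate your equivalence invariant (signatures match $\Leftrightarrow$ some bijection extends ${\bf A}\cup\{(\nu_i,\nu_j)\}$), nor does it attempt the converse direction you correctly flag as the crux; it tacitly leans on Corollary \ref{corollary:C3} to reject bad candidates and never discusses the scenario where a locally acceptable commitment is globally non-extendable across several cells. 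Your framing is the more rigorous of the two and actually isolates where condition 3 (injectivity) enters, whereas in the paper's proof condition 3 plays no visible role.
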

\begin{proof}
As in Corollary \ref{corollary:C3}, we need only to study the case when analogous cells of the two graphs have size $2$. Let $\theta^{(1)}$ and $\theta^{(2)}$ be analogous cells of the isomorphic graphs $\text{G1}$ and $\text{G2}$ respectively. Let the nodes in each cell be $\{\nu^{(i)}, \eta^{(i)}\}$ with $i\in\{1,2\}$. Further, assume $\nu^{(1)}$ can be matched to either $\nu^{(2)}$ or $\eta^{(2)}$, and similarly for $\eta^{(1)}$. Hence, we have exactly two bijections $\{(\nu^{(1)}, \nu^{(2)}) , (\eta^{(1)}, \eta^{(2)})\}$ or $\{(\nu^{(1)}, \eta^{(2)}) , (\eta^{(1)}, \nu^{(2)})\}$. If $\nu^{(1)}$ and $\nu^{(2)}$ are inspected first and the input pattern construction satisfies the first condition, they would produce identical signatures; and hence declared as matched pairs. After matching $\nu^{(1)}$ and $\nu^{(2)}$, only $\eta^{(2)}$ could be inspected for matching $\eta^{(1)}$. By the second condition, the input patterns have identical values at the entries that correspond to  $\eta^{(1)}$ and $\eta^{(2)}$ and identical values at the entries that correspond to $\nu^{(1)}$ and $\nu^{(2)}$. Therefore, the two graphs would produce identical signatures, and  $\eta^{(1)}$ and $\eta^{(2)}$ would be matched. 
\end{proof}

\subsection{Implementation}\label{sec:implementation}

The  difficulty with designing injective functions for the matching algorithm is that the function inputs can have arbitrary order  if the maximum cell size is bigger than $1$, because of the arbitrary ordering of nodes within a cell. This difficulty can be mitigated if the function is applied to either a sorted vector of the inputs or the sum of all inputs.  In our implementation,  the constituent functions are chosen to resemble  the sum-product algorithm \cite{Richardson}. Although the functions are not strictly injective, it is with probability $0$ to have a similar output for different inputs.  The constituent functions at iteration $l$ are defined as:
\begin{changemargin}{-0.5cm}{0cm}
\begin{eqnarray}
f^{(l)} \left(E, \nu\right) &\triangleq& \Gamma(\nu) + a^{(l)} \sum_{E^\prime \in {\bf{Q}}(\nu)-\{E\}}  \lambda(E^\prime) \label{eq:fl}\\ 
h^{(l)} \left(E, \tilde{E} \right) &\triangleq&  b^{(l)} \Psi(E) + c^{(l)} \Psi(\tilde{E})   \label{eq:E_message} \\
y^{(l)}(\nu) &\triangleq& d^{(l)} \sum_{E \in {\bf{Q}}(\nu)}\lambda(E) \ \ \ \ \ \label{eq:yl}
\end{eqnarray}
\end{changemargin}
where the scalars $\{a^{(l)}, b^{(l)}, c^{(l)}, d^{(l)}\}$ are different and have different nonzero values at different iterations.
To avoid numerical precision issues in computing the output signature, integer data types are used rather than floating-point numbers. 

\subsection{Supervised Matching} \label{sec:supervised} 
Algorithm \ref{Alg:Overall} is an exhaustive search procedure that iteratively investigates all possible isomorphism mappings. Therefore, it represents the worst-case complexity of the matching algorithm. 
The number of iterations in Algorithm \ref{Alg:Overall} could be significantly reduced by noting that there is no mapping ambiguity  if the values of the  output fingerprint  are \emph{distinct} at any point of time. This also applies at a cell-level matching, i.e., if the fingerprint portion that corresponds to a particular cell has distinct values, then all nodes of this cell could be mapped unambiguously. 

By exploiting the above observation, the number of matching iterations are significantly reduced by \emph{supervising} the matching procedure such that each \emph{sub}-fingerprint that corresponds to an unmatched cell is investigated at each  iteration of Algorithm \ref{Alg:Mapping}. If it does not have duplicate values, then this sub-fingerprint is compared with the corresponding sub-fingerprint of the other graph. If they are identical, then the two cells are matched with the direct mapping between nodes  that correspond to the distinct values. Further, If these analogous sub-fingerprints are not identical, then these two cells cannot be matched and the two graphs are declared  non-isomorphic. This  supervised matching could be further refined by comparing non-duplicates within each sub-fingerprint without requiring the the whole sub-fingerprint to be distinct.

\subsection{Complexity} \label{sec:complexity}
The algorithm has two components: the construction of the canonical forms, and the matching of cells. The construction of the canonical forms requires computing the order of each node and sorting the nodes according to their order and their neighbors order. This in general has a complexity of  $O(M \log M)$. This complexity component persists regardless of the  heuristics in the matching procedure. Hence, it represents the best-case complexity of the overall algorithm.

The worst-case complexity of the matching algorithms follows when the exhaustive procedure in Algorithm \ref{Alg:Overall} is utilized without an improvement from the supervised matching.
The worst-case number of iterations in Algorithm \ref{Alg:Overall} is $M^2$, which takes effect when all nodes belong to a single cell.  The worst number of iterations in Algorithm \ref{Alg:Mapping} is $L=M-1$. With the  choice of the constituent functions as in section \ref{sec:implementation}, the complexity in each iteration in Algorithm \ref{Alg:Mapping} is $O(K + D \log D)$, where $D$ is the maximum cell size. Note that, $O( D \log D)$ is the worst case complexity of $\text{sim\_sort}()$ procedure at the end of each iteration.  Therefore,  graphs with small cells would in general gave $D \ll K$  and the overall complexity becomes $O(M^3K)$. For graphs with large cells $D$ approaches $M$, and the overall complexity is approximately  $O(M^4 \log M)$.

The above complexity is the worst-case complexity for isomorphic graphs. For non-isomorphic graphs, the search time is much less because a small number of iterations is expected before a mismatch of fingerprints occurs. Similar behavior is expected with supervised matching of isomorphic graphs. In both cases, the complexity is dominated by the canonical forms construction, i.e.,  $O(M \log M)$.

  The polynomial complexity of the matching algorithm stems primarily from the absence of backtracking in the search procedure. Backtracking is not needed because the generation of the input pattern mitigates ambiguity in the isomorphic mapping. Hence, a mismatch of fingerprints at any stage implies that the graphs are non-isomorphic.

\section{Discussion}\label{sec:discussion}

\subsection{Generalizations} \label{sec:generalization}
\subsubsection{Weighted and Directed Graphs} To generalize the proposed algorithm to weighted graphs, the weights on weighted graph should be incorporated in both constructing the canonical form, and computing  edge messages in \eqref{eq:E_message}. In this case, the classification of cells  not only counts the number of edges but also utilizes the weights on each edge such that edges in the same cell have edges with exactly the same weight.
Directed graphs could be handled by redefining the canonical graph such that one partition corresponds to the source nodes and the other partition corresponds to the end nodes, i.e., the end nodes resemble edges in the  canonical representation for undirected graphs. In this case,  nodes with ongoing and outgoing edges are represented twice at both sides of the bipartite canonical representation. Therefore, a unique numerical identifier is associated with the same node at both sides to remove node ambiguity. This unique identifier is utilized in the message generation \eqref{eq:fl}-\eqref{eq:yl}.

\subsubsection{Graph Automorphism}
The proposed algorithm can be straightforwardly extended to the graph automorphism problem, by restricting the mapping search within a cell to non-identical nodes, i.e., for a cell of $k$ nodes, the first node to be matched would be compared to the other $k-1$ nodes and so on. This is done for at least one cell in the matching procedure. Other automorphisms could be extracted similarly be restricting the matching to nodes that were not paired in earlier automorphisms. Note that, the same idea of constrained mapping could be used to find other bijections in the original graph isomorphism problem.

\subsection{Experimental Evaluation}
The proposed algorithm has been evaluated using  test graphs from the TC-15 database \cite{foggia2001database} which is available  at \cite{databasedownload}. A total of 12,200 pairs of isomorphic graphs, as outlined in Table \ref{table:test_graphs},  were successfully evaluated. The maximum graph size is $1296$ nodes and the minimum graph size is $16$ nodes. 

\begin{table}[h]
\centering
\caption{Test graphs in TC-15 graph database}
\begin{tabular}{|c|c|c|}
  \hline
   Prefix & Description & No. of test\\
   & &  graph pairs \\
   \hline
   \text{iso\_r$\rho$}& randomly generated graphs  & $$\\
   &with $\rho = 01, 001, 005$ &$3\times 10^3$\\
   \hline
   \text{iso\_m2D$\rho$}& bi-dimensional meshes  & \\
   &with $\rho = \ , r2, r4, r6$& $4\times 10^3$\\
   \hline
   \text{iso\_m3D$\rho$}& tri-dimensional meshes  & \\
   &with $\rho = \ , r2, r4, r6$& $3.2\times 10^3$\\
   \hline
   \text{iso\_m4D$\rho$}& quadri-dimensional meshes  & \\
   &with $\rho = \ , r2, r4, r6$& $2\times 10^3$\\
   \hline
    \end{tabular}
\label{table:test_graphs}
\end{table}


In Fig. \ref{fig:processing_time}, the median processing time for all isomorphic graphs in the TC-15 database are shown\footnote{Executed on a MacPro with 2.7GHz dual-core Intel i5 microprocessor and 8GB of memory.}, where each data point corresonds to $100$ pairs of isomorphic graphs. All the graphs are matched successfully, and in most case the complexity is upper bounded by $O(M^2)$.
\begin{figure}[ht]
\begin{changemargin}{-0.2cm}{-1cm}
\begin{center}
\includegraphics[width=3.8in,height=3.3in, trim = 0.35in 0.53in 0.0in 0.0in, clip,]{./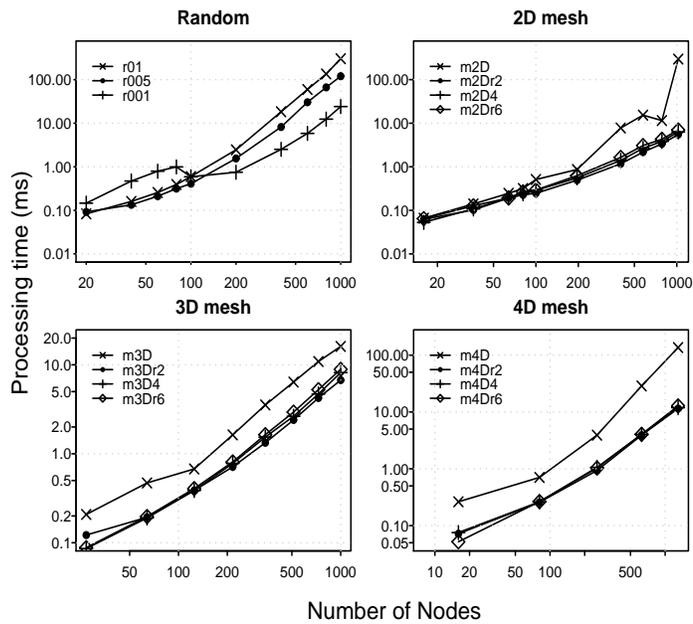}
\end{center} 
\vspace{-2mm}
\caption{Median processing time of the matching algorithm for graphs in the TC-15  database}.
\label{fig:processing_time}
\end{changemargin}
\end{figure}

%
In Fig. \ref{fig:comparison} the median processing time of the proposed algorithm is compared with the Nauty matching algorithm \cite{mckay1981practical}\footnote{The  Nauty implementation in the software package at \cite{nautydownload} is used.}, which represents an optimized algorithm from both design and software implementation perspectives. The Nauty algorithm has noticeably lower complexity than the message-passing algorithm for the test graphs. However, as mentioned earlier it could have an exponential complexity in the worst case.
Further, the optimization of data structures and software implementation of the message-passing algorithm has not been addressed.


\begin{figure}[ht]
\begin{center}
\includegraphics[width=2.7in,height=2.2in, trim = 0.35in 0.53in 0.0in 0.0in, clip,]{./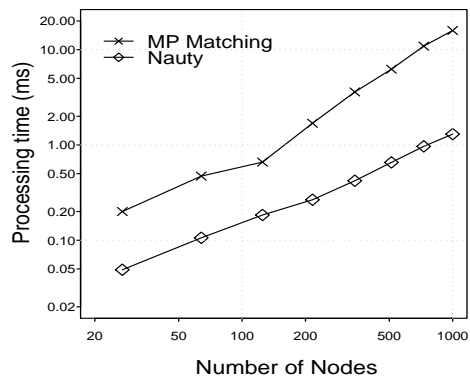}
\end{center} 
\vspace{-4mm}
\caption{Comparison of the median processing times with Nauty algorithm for Tri-Dimensional meshes (m3D)}.
\label{fig:comparison}
\end{figure}

\section{Conclusion}
The work represents a new approach for graph isomorphism and automorphism without the need of backtracking, which could result in exponential complexity in the worst case.
The main idea of  message-passing is borrowed from different disciplines that  employed it to  curb the complexity of the maximum-likelihood estimator given a general observation model. The proposed algorithm formulates the graph isomorphism problem in the same framework, by introducing a bipartite graph representation and a special design of the input pattern, which removes the ambiguity in the isomorphism mapping. We introduced sufficient conditions for completeness and uniqueness, and introduced few heuristics that significantly reduced the number of iterations in the search algorithm. 
The algorithm does not assume a particular structure or set  constraints on the test graphs. 
This favorable performance is primarily because of the iterative search procedure that does not require backtracking. This performance has been achieved by earlier algorithms that are based on random-walks if the input graphs satisfy some spectral constraints. 

Future work also includes generalizing the procedure to the subgraph isomorphism problem and inexact graph matching.


%

\bibliographystyle{siam}
\bibliography{graph_ref}
\end{document}